\documentclass[journal,onecolumn,10pt]{IEEEtran}
\usepackage{amsmath,amssymb,bm}
\usepackage{graphicx}
\usepackage{cite}
\usepackage{tikz}
\usetikzlibrary{arrows.meta,positioning,shapes.geometric}
\usepackage{microtype}
\usepackage{url}
\newtheorem{theorem}{Theorem}
\newtheorem{proposition}{Proposition}
\newtheorem{corollary}{Corollary}
\newcommand{\CN}{\mathcal{CN}}
\newcommand{\HH}{\mathcal{H}}
\newcommand{\Qb}{\overline F}
\newcommand{\E}{\mathbb E}
\title{{\normalfont\normalsize\bfseries Accepted for publication in IEEE Wireless Communications Letters, Copyright transform to IEEE}\\[1.0ex]
Exact Bistatic Detection With Dual-Use AF Relays in SAGINs}
\author{Changhao He,~\IEEEmembership{Graduate Student Member,~IEEE,} and Xiaojuan Zhang,~\IEEEmembership{Senior Member,~IEEE} %
\thanks{C. He is with King Abdullah University of Science and Technology (KAUST), Thuwal, Saudi Arabia (e-mail: changhao.he@kaust.edu.sa). X. J. Zhang is with the Institute of Advanced Intelligence and Computing, Agency for Science, Technology and Research (A*STAR), Singapore (e-mail: xiaojuanzhang@ieee.org).}
 \thanks{This work was supported by the Agency for Science, Technology and Research (A*STAR), Singapore under the Semiconductor RIE Flagship Programme (Integrated Circuit Design \& Systems), grant No. H26-MSE1535.}}

\begin{document}
\raggedbottom
\maketitle


\begin{abstract}
We study bistatic UAV detection using dual-use amplify-and-forward (AF) relays in satellite--aerial--ground integrated networks. The target reflects the pilot, payload, and relay noise, allowing the exact finite-sample Neyman--Pearson detection probability to be derived in noncentral chi-square form under Gaussian payload signaling. The AF cascade introduces a variance signature even without payload transmission and leads to a rate-constrained pilot allocation, distinct relay-power limits, and phase-dependent variance changes under residual direct-path interference. For a decorrelated or suppressed direct path, we further derive a phase-only generalized likelihood ratio test with exact single-integral error probabilities at an assumed target magnitude; the correlated case retains a known-coefficient detection law. A 15-GHz low-Earth-orbit link budget and Monte Carlo simulations validate the analysis.
\end{abstract}

\begin{IEEEkeywords}
Amplify-and-forward relay, bistatic sensing, integrated sensing and communications, intruder UAV detection, SAGIN.
\end{IEEEkeywords}


\section{Introduction}
Low-altitude networks need to detect unauthorized unmanned aerial vehicles (UAVs), but deploying dedicated radar systems can be costly. Integrated sensing and communications (ISAC) offers an alternative by reusing communication infrastructure and waveforms \cite{liu2022isac,xiong2023tradeoff}. Recent studies have considered networked low-altitude ISAC \cite{cheng2025networked}, hybrid deterministic/Gaussian bistatic waveforms \cite{song2026lae,xie2025bistatic}, and amplification-assisted detection using an active intelligent reflecting surface (IRS) \cite{song2025activeirs}. Amplify-and-forward (AF) UAV relays can also support satellite--aerial--ground integrated networks (SAGINs) when direct satellite-to-ground links are blocked \cite{bhola2026amud}.

The closest model is \cite{song2026lae}, which derives an exact Neyman--Pearson (NP) detector for a direct deterministic/Gaussian waveform without AF gain normalization. Reference \cite{xie2025bistatic} considers a similar waveform but uses a generalized likelihood ratio test (GLRT) for unknown target gain with asymptotic analysis. Here, we retain exact finite-$L$ performance while maximizing over the unknown echo phase at an assumed magnitude. Reference \cite{song2025activeirs} considers target-reflected active-IRS amplifier noise but has neither an AF payload nor a served user. Our contribution is the AF-cascade specialization and its communication--sensing consequences, rather than a new Gaussian detector.


\looseness=-1
Specifically, let $E_\mu$ denote the coherent pilot-echo energy over an $L$-symbol dwell, $\Delta$ the target-induced per-symbol variance increment and $\beta$ the relay gain. The AF mapping is
\begin{align}
 E_\mu&\triangleq L|a|^2\beta^2|g|^2p_d,\qquad
 \Delta\triangleq |a|^2\beta^2(|g|^2p_g+\sigma_r^2),\nonumber\\
 \beta^2&\triangleq\frac{P_r}{|g|^2P_s+\sigma_r^2},
\label{eq:afmap}
\end{align}
where $a$ is the target coefficient, $g$ the satellite--relay channel, $p_d$ and $p_g$ the pilot and payload powers, $P_s$ and $P_r$ the satellite and relay transmit powers, and $\sigma_r^2$ the relay-noise variance. Unlike direct transmission, the relay noise is also reflected by the target, giving $\Delta>0$ even for $p_g=0$. Moreover, $\beta$ couples $P_s$, $P_r$, the user signal-to-interference-plus-noise ratio and the sensing performance. This leads to a rate-constrained pilot split, distinct relay-power limits for matched-filter and full NP detection, and direct-path interference whose correlation with the target echo can reverse the optimal decision tail. We also derive an exact phase-only GLRT at an assumed target magnitude. Fig.~\ref{fig:system} shows the geometry, with illumination provided by a low-Earth-orbit (LEO) satellite through the AF relay; the analysis starts after range--Doppler cell formation, conditional on a coherent-cell target coefficient.

\begin{figure}[t]
\centering













\includegraphics[width=0.70\linewidth,keepaspectratio]{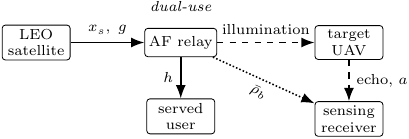}

\caption{Dual-use AF-relay bistatic geometry; solid: communication, dashed: illumination/echo, dotted: residual relay--receiver direct path.}
\label{fig:system}
\end{figure}

\section{AF Cascade and Exact Detector}
We test target absence ($\HH_0$) against target presence
($\HH_1$). Over a coherent dwell of $L$ symbols, the
satellite transmits $x_s[\ell]=\sqrt{p_d}c[\ell]+\sqrt{p_g}s[\ell]$
with $p_d+p_g=P_s$, where $|c[\ell]|=1$ is the known pilot
and $s[\ell]\sim\CN(0,1)$ is the Gaussian payload.
Finite-alphabet signaling makes $\HH_1$ a Gaussian mixture, so the derived law becomes approximate, although $\HH_0$ and its false-alarm calibration remain exact under ideal DPI suppression. At the setting of Fig.~\ref{fig:results}(b), $10^6$-trial simulations show that quadrature phase-shift keying (QPSK) agrees with the Gaussian law within $\pm0.001$; QPSK markers are also shown. All channels and $a$ remain constant over the dwell,
residual delay and Doppler errors are neglected, and payload
and noise samples are independent across $\ell$. The relay
uses a real gain $\beta\ge0$ given by \eqref{eq:afmap} and
forwards $x_r[\ell]=\beta(gx_s[\ell]+n_r[\ell])$ with
$n_r[\ell]\sim\CN(0,\sigma_r^2)$. The served user, with
channel $h$ and noise variance $\sigma_u^2$, treats the pilot
as interference, giving

\begin{equation}
 \gamma_c=\frac{|h\beta g|^2p_g}{|h\beta g|^2p_d+|h\beta|^2\sigma_r^2+\sigma_u^2},
\label{eq:sinr}
\end{equation}
and $R_c=\tau\log_2(1+\gamma_c)$ with resource fraction
$\tau\in(0,1]$; the instantaneous (full-duplex) relaying
in \eqref{eq:raw} gives $\tau=1$.

After subtracting the known direct-path pilot, we first model the residual as decorrelated from the target-cell waveform. In Table~\ref{tab:budget}, the target lies $500$ m off the relay--receiver baseline, giving an excess path of $236$ m compared with the total-path resolution $c/B=2998$ m ($c$ is the speed of light), or $0.08$ cell. The decorrelated residual is therefore an idealized cancellation model; Section~\ref{sec:corr} considers the correlated case. With $\delta_0=0$, $\delta_1=1$,
\begin{align}
 y[\ell]={}&\delta_i a\beta\!\left(g\sqrt{p_d}c[\ell]+g\sqrt{p_g}s[\ell]+n_r[\ell]\right)\nonumber\\
 &+\bar\rho_b\beta\!\left(g\sqrt{p_g}\widetilde s[\ell]+\widetilde n_r[\ell]\right)+n_e[\ell],
\label{eq:raw}
\end{align}
under $\HH_i$, where $n_e\sim\CN(0,\sigma_e^2)$ and the residual direct-path interference (DPI) coefficient $\bar\rho_b\triangleq\rho b$ represents the relay--receiver channel $b$ scaled by the suppression residual $\rho$. The tilded variables represent the residual waveform after range--Doppler projection or cancellation. They are independent of the target-cell waveform but have the same payload-plus-relay-noise variance. Thus, $\HH_0:y[\ell]\sim\CN(0,\sigma_0^2)$ and $\HH_1:y[\ell]\sim\CN(\mu_\ell,\sigma_1^2)$, with $v=|g|^2p_g+\sigma_r^2$,
\begin{align}
 \mu_\ell&=a\beta g\sqrt{p_d}c[\ell],\quad
 \sigma_0^2=\sigma_e^2+|\bar\rho_b\beta|^2v,\nonumber\\
 \sigma_1^2&=\sigma_0^2+\Delta,\qquad \Delta=|a\beta|^2v>0.
\label{eq:moments}
\end{align}
\emph{Fading:} The detection law is conditional on $(a,g,\bar\rho_b)$. If the relay tracks $|g|^2$ over each dwell, as assumed in \eqref{eq:afmap}, feeder-link fading has little effect: $E_\mu\to L|a|^2p_dP_r/P_s$ and $\Delta\to|a|^2P_rp_g/P_s$ for $|g|^2P_s\gg\sigma_r^2$, satisfied by $56.45$ dB in Table~\ref{tab:budget}. Varying $|g|^2$ over $\pm20$ dB changes $P_D$ by less than $9.4\times10^{-5}$. Target-path fluctuation is examined in Fig.~\ref{fig:results}(h).


\emph{Operating regime:} \eqref{eq:radarbudget} gives a
\emph{nominal} $|a|$ for an assumed radar cross section (RCS)
but provides no phase information. The following analysis
therefore serves as a conditional known-coefficient benchmark;
Section~\ref{sec:mismatch} extends it to an unknown phase
using a GLRT and examines magnitude uncertainty for both
detectors.


From \eqref{eq:afmap}, $E_\mu=\sum_\ell|\mu_\ell|^2=L|a\beta g|^2p_d$. Ignoring terms independent of the observations, the log-likelihood ratio is monotone in
\begin{equation}
 T=\frac{\Delta}{\sigma_0^2\sigma_1^2}\sum_\ell|y[\ell]|^2+
 \frac{2}{\sigma_1^2}\sum_\ell\Re\{\mu_\ell^*y[\ell]\},
\label{eq:T}
\end{equation}
which combines received-signal energy and coherent pilot correlation. With $w=\Delta/(\sigma_0^2\sigma_1^2)$ and $\kappa=\sigma_0^2/\Delta$, completing the square gives
\begin{equation}
 T=wU-w\kappa^2E_\mu,\qquad U=\sum_{\ell=1}^{L}|y[\ell]+\kappa\mu_\ell|^2.
\label{eq:square}
\end{equation}
Define $\Qb_{\nu,\lambda}(x)=\Pr\{\chi'^2_\nu(\lambda)>x\}$, with inverse $\Qb^{-1}_{\nu,\lambda}$ in $x$.


\begin{theorem}[Exact finite-$L$ probability]
Under $\HH_i$,
\begin{equation}
 \frac{2U}{\sigma_i^2}\sim\chi'^2_{2L}(\lambda_i),\quad
 \lambda_i=\frac{2\sigma_i^2E_\mu}{\Delta^2},~i\in\{0,1\},
\label{eq:ncx2}
\end{equation}
and for false-alarm probability $P_{\rm FA}=\alpha$,
\begin{equation}
 P_D=\Qb_{2L,\lambda_1}\!\left(\frac{\sigma_0^2}{\sigma_1^2}
 \Qb^{-1}_{2L,\lambda_0}(\alpha)\right).
\label{eq:pd}
\end{equation}
\end{theorem}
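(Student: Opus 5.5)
Under $\HH_i$, write $y[\ell]=\delta_i\mu_\ell+z_i[\ell]$, where the $z_i[\ell]\sim\CN(0,\sigma_i^2)$ are independent across $\ell$; this is \eqref{eq:moments} together with the independence of the payload and noise samples. Then each summand of $U$ in \eqref{eq:square} is $|y[\ell]+\kappa\mu_\ell|^2$ with
\[
y[\ell]+\kappa\mu_\ell\sim\CN\bigl((\delta_i+\kappa)\mu_\ell,\sigma_i^2\bigr).
\]
Scaling by $\sqrt{2}/\sigma_i$ gives a complex variable whose real and imaginary parts are independent real Gaussians with unit variance. Each term $2|y[\ell]+\kappa\mu_\ell|^2/\sigma_i^2$ is therefore a noncentral chi-square with two degrees of freedom. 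Summing over the $L$ independent symbols yields $2U/\sigma_i^2\sim\chi'^2_{2L}(\lambda_i)$ with
\[
\lambda_i=\frac{2(\delta_i+\kappa)^2}{\sigma_i^2}\sum_\ell|\mu_\ell|^2=\frac{2(\delta_i+\kappa)^2E_\mu}{\sigma_i^2}.
\]

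The only step requiring care is reducing this to the stated form. For $i=0$ we have $\kappa^2/\sigma_0^2=\sigma_0^4/(\Delta^2\sigma_0^2)=\sigma_0^2/\Delta^2$, so $\lambda_0=2\sigma_0^2E_\mu/\Delta^2$. For $i=1$ we have $1+\kappa=(\Delta+\sigma_0^2)/\Delta=\sigma_1^2/\Delta$, so $(1+\kappa)^2/\sigma_1^2=\sigma_1^2/\Delta^2$ and $\lambda_1=2\sigma_1^2E_\mu/\Delta^2$. This is a unified expression of the form \eqref{eq:ncx2}. I will also record that the sign convention $+\kappa\mu_\ell$ from completing the square in \eqref{eq:T} is consistent: expanding $wU$ returns $w\sum|y|^2+2w\kappa\Re\{\mu^*y\}+w\kappa^2E_\mu$, and $w\kappa=1/\sigma_1^2$.

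For \eqref{eq:pd}, note that $w>0$ because $\Delta>0$. Hence $T$ is strictly increasing in $U$, and the NP test $T\gtrless\eta$ is equivalent to $U\gtrless u$. The false-alarm constraint gives
\[
\alpha=\Pr\{2U/\sigma_0^2>2u/\sigma_0^2\mid\HH_0\}=\Qb_{2L,\lambda_0}(2u/\sigma_0^2),
\]
so $2u=\sigma_0^2\Qb^{-1}_{2L,\lambda_0}(\alpha)$; the inverse is well defined since the noncentral chi-square CDF is continuous and strictly increasing on $(0,\infty)$. Then
\[
P_D=\Pr\{2U/\sigma_1^2>2u/\sigma_1^2\mid\HH_1\}=\Qb_{2L,\lambda_1}\bigl((\sigma_0^2/\sigma_1^2)\Qb^{-1}_{2L,\lambda_0}(\alpha)\bigr).
\]
No step is a real obstacle; the only points needing attention are the algebraic simplification of the noncentralities and the positivity of $w$, which guarantees the upper-tail decision region.
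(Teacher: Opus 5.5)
Your proof is correct and takes the same route as the paper: shift each sample by $\kappa\mu_\ell$, identify the means $\kappa\mu_\ell$ and $(1+\kappa)\mu_\ell=(\sigma_1^2/\Delta)\mu_\ell$ with variances $\sigma_i^2$, read off the noncentral chi-square law, and set the threshold under $\HH_0$. Your additional checks (the noncentrality algebra, $w\kappa=1/\sigma_1^2$, and $w>0$ fixing the upper-tail decision region) make explicit steps that the paper leaves implicit.
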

\emph{Proof:} For $z_\ell=y[\ell]+\kappa\mu_\ell$, the means are $\E[z_\ell\mid\HH_0]=\kappa\mu_\ell$ and $\E[z_\ell\mid\HH_1]=(1+\kappa)\mu_\ell=(\sigma_1^2/\Delta)\mu_\ell$, with variances $\sigma_0^2$ and $\sigma_1^2$, respectively. Since the shifted observations remain independent proper complex Gaussian, their squared norm follows a noncentral chi-square distribution with $2L$ degrees of freedom and the noncentralities in \eqref{eq:ncx2}. Setting the threshold under $\HH_0$ gives \eqref{eq:pd}. \hfill$\square$


A Gaussian approximation is obtained by taking $U\approx\mathcal N(m_i,v_i)$, where $q_0=\kappa$, $q_1=1+\kappa$,
\begin{equation}
 m_i=L\sigma_i^2+q_i^2E_\mu,~~v_i=L\sigma_i^4+2\sigma_i^2q_i^2E_\mu,
\label{eq:gaussmom}
\end{equation}
giving $P_D^{\rm G}=Q((m_0+\sqrt{v_0}Q^{-1}(\alpha)-m_1)/\sqrt{v_1})$, with $Q(\cdot)$ denoting the Gaussian tail function. A matched-filter (MF) benchmark and, for Fig.~\ref{fig:results}(c), the energy statistic $Z=\sum_\ell|y[\ell]|^2$ have the exact detection probabilities
\begin{align}
 P_D^{\rm MF}&=Q\!\left(\frac{\sigma_0}{\sigma_1}Q^{-1}(\alpha)-\frac{\sqrt{2E_\mu}}{\sigma_1}\right),
\label{eq:mf}\\
 P_D^{\rm E}&=\Qb_{2L,2E_\mu/\sigma_1^2}\!\left(
 \frac{\sigma_0^2}{\sigma_1^2}\Qb^{-1}_{2L,0}(\alpha)\right),
\label{eq:energy}
\end{align}
which characterize the two components combined by the NP test.


\section{Rate, Power, and Direct-Path Consequences}
A minimum rate $R_{\min}$ requires $\gamma_c\ge\gamma_{\min}=2^{R_{\min}/\tau}-1$. With $\mathcal A=|h\beta g|^2$, $\mathcal B=|h\beta|^2\sigma_r^2+\sigma_u^2$ and $p_g=P_s-p_d$, $\gamma_c(p_d)=\mathcal A(P_s-p_d)/(\mathcal Ap_d+\mathcal B)$ decreases strictly with pilot power. Thus, the rate requirement is feasible if and only if
\begin{equation}
 \gamma_{\min}\le\gamma_c(0)=\frac{\mathcal AP_s}{\mathcal B},
\label{eq:ratefeas}
\end{equation}
and the maximum feasible pilot power is
\begin{equation}
 p_d^{\max}=\frac{\mathcal AP_s-\gamma_{\min}\mathcal B}{\mathcal A(1+\gamma_{\min})}.
\label{eq:pdmax}
\end{equation}
\eqref{eq:sinr} treats the pilot as interference, providing a conservative bound. With perfect pilot cancellation, $\gamma_c=\mathcal A(P_s-p_d)/\mathcal B$ and $p_d^{\max}=P_s-\gamma_{\min}\mathcal B/\mathcal A$; only the feasible pilot-power limit changes, not the allocation rule. The optimization holds $P_s$, $P_r$, $g$, $h$, $\bar\rho_b$ and $\eta$ fixed, and hence also $\beta$.


\begin{proposition}[Kullback--Leibler-optimal pilot allocation]
Assume \eqref{eq:ratefeas} holds for a nondegenerate link with $g\ne0$, $a\ne0$, $P_r>0$, $\sigma_e^2>0$. Under the decorrelated residual-DPI model in \eqref{eq:raw}, the Kullback--Leibler (KL) divergence $D(\mathbb P_1\|\mathbb P_0)$ between the distributions $\mathbb P_i$ of $\{y[\ell]\}_{\ell=1}^L$ under $\HH_i$ increases strictly with $p_d$ for any residual DPI coefficient $\bar\rho_b\in\mathbb C$. Hence, $p_d=p_d^{\max}$ is KL-optimal.
\end{proposition}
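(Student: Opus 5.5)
The plan is to write the KL divergence in closed form using the Gaussian laws in \eqref{eq:moments} and then differentiate it with respect to $p_d$. Throughout, $p_g=P_s-p_d$, and $\beta$ is held fixed as stated before the proposition; $\beta>0$ because $P_r>0$.

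\textbf{Closed form.} The samples are independent with $\HH_1:\CN(\mu_\ell,\sigma_1^2)$ and $\HH_0:\CN(0,\sigma_0^2)$. The standard per-sample formula for proper complex Gaussians gives $\ln(\sigma_0^2/\sigma_1^2)+\sigma_1^2/\sigma_0^2-1+|\mu_\ell|^2/\sigma_0^2$. Summing over $\ell$ and writing $r\triangleq\Delta/\sigma_0^2$,
\begin{equation*}
D(\mathbb P_1\|\mathbb P_0)=L\,f(r)+\frac{E_\mu}{\sigma_0^2},\qquad f(r)=r-\ln(1+r).
\end{equation*}

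\textbf{Dependence on $p_d$.} Set $c=|a\beta|^2>0$ and $d=|\bar\rho_b\beta|^2\ge0$. Then $v=|g|^2(P_s-p_d)+\sigma_r^2$ and $dv/dp_d=-|g|^2$. The moments become $\Delta=cv$, $\sigma_0^2=\sigma_e^2+dv$ and $E_\mu=Lc|g|^2p_d$.

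The two terms pull in opposite directions, which is the main obstacle. Raising $p_d$ raises the pilot term $E_\mu/\sigma_0^2$, since $E_\mu$ grows and $\sigma_0^2$ shrinks. It lowers the variance term, because $r=cv/(\sigma_e^2+dv)$ is increasing in $v$ and $f$ is increasing for $r>0$. So we must show that the pilot gain strictly dominates the loss of variance signature.

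\textbf{Derivative comparison.} Using $f'(r)=r/(1+r)$ and $dr/dv=c\sigma_e^2/\sigma_0^4$, the variance term contributes
\begin{equation*}
-L|g|^2\,\frac{r}{1+r}\,\frac{c\,\sigma_e^2}{\sigma_0^4}.
\end{equation*}
Differentiating $E_\mu/\sigma_0^2=Lc|g|^2p_d/\sigma_0^2$, the pilot term contributes
\begin{equation*}
Lc|g|^2\left(\frac{1}{\sigma_0^2}+\frac{p_d\,d\,|g|^2}{\sigma_0^4}\right).
\end{equation*}
Multiply the total derivative by $\sigma_0^4/(Lc|g|^2)>0$, which uses $a\ne0$, $g\ne0$ and $\beta>0$. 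Positivity then reduces to
\begin{equation*}
\sigma_0^2+p_d\,d\,|g|^2>\sigma_e^2\,\frac{r}{1+r}.
\end{equation*}
This holds strictly because the left side is at least $\sigma_0^2\ge\sigma_e^2>0$, while $r/(1+r)<1$. The argument uses only $d\ge0$, so it is valid for every $\bar\rho_b\in\mathbb C$.

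\textbf{Conclusion.} The KL divergence is therefore strictly increasing on the feasible interval $[0,p_d^{\max}]$. This interval is nonempty by \eqref{eq:ratefeas}, because $\gamma_c(p_d)$ is strictly decreasing. Hence $p_d=p_d^{\max}$ from \eqref{eq:pdmax} is KL-optimal.
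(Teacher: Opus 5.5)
Your proof is correct and follows the same route as the paper: write the proper-complex Gaussian KL divergence in closed form, then show $\partial D/\partial p_d>0$ for every $\bar\rho_b$ with $\beta$ fixed. The only difference is in the bookkeeping. The paper first uses $E_\mu/L+\Delta=K\bar v$ (with $K=|a\beta|^2$, $\bar v=|g|^2P_s+\sigma_r^2$) to merge the pilot and variance terms, so the derivative is a single fraction with the manifestly positive numerator $N_0^2+N_0C(\bar v+v)+C(C+K)\bar v v$ (in your notation $N_0=\sigma_e^2$, $C=d$, $K=c$). You instead keep the two terms separate and conclude via $\sigma_0^2\ge\sigma_e^2>\sigma_e^2\,r/(1+r)$; your bracket times $\sigma_1^2$ reproduces exactly the paper's numerator.
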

\emph{Proof:} Let $G=|g|^2$, $K=|a\beta|^2$, $C=|\bar\rho_b\beta|^2$, $N_0=\sigma_e^2$, $\bar v=GP_s+\sigma_r^2$ and $v=\bar v-Gp_d$; then $\sigma_0^2=N_0+Cv$, $\sigma_1^2=N_0+(C+K)v$ and $E_\mu/L+\Delta=K\bar v$. This gives $D(\mathbb P_1\|\mathbb P_0)=L\{K\bar v/(N_0+Cv)-\ln[1+Kv/(N_0+Cv)]\}$, whose derivative is
\begin{equation}
 \frac{\partial D}{\partial p_d}=
 \frac{LGK\!\left[N_0^2+N_0C(\bar v+v)+C(C+K)\bar v v\right]}
 {(N_0+Cv)^2\left[N_0+(C+K)v\right]}>0.
\label{eq:klderivative}
\end{equation}
The strict positivity proves the claim. \hfill$\square$


\begin{proposition}[Matched-filter asymptote]
For $g\ne0$, $a\ne0$, $p_d>0$ and $\sigma_e^2>0$, the coherent matched-filter deflection
$d_{\rm MF}^2=2E_\mu/\sigma_1^2$ increases strictly with $P_r$ and approaches
\begin{equation}
 \lim_{P_r\to\infty}d_{\rm MF}^2
 =
 \frac{|a|^2}{|\bar\rho_b|^2+|a|^2}
 \frac{2Lp_d}{p_g+\sigma_r^2/|g|^2},
\label{eq:mfceiling}
\end{equation}
which reduces to $2Lp_d/(p_g+\sigma_r^2/|g|^2)$ under ideal DPI suppression, $\bar\rho_b=0$.
\end{proposition}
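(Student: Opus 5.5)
We write $d_{\rm MF}^2$ explicitly as a function of $\beta^2$, use the fact that $\beta^2$ increases with $P_r$, and then show the resulting one-variable map is increasing and find its limit. Substituting \eqref{eq:afmap} and \eqref{eq:moments}, with $v=|g|^2p_g+\sigma_r^2$, gives
\begin{equation*}
 d_{\rm MF}^2=\frac{2L|a|^2\beta^2|g|^2p_d}{\sigma_e^2+\bigl(|\bar\rho_b|^2+|a|^2\bigr)\beta^2 v}.
\end{equation*}
Write $x=\beta^2=P_r/(|g|^2P_s+\sigma_r^2)$. This is a strictly increasing linear function of $P_r$, and it tends to $\infty$ as $P_r\to\infty$, because the denominator $|g|^2P_s+\sigma_r^2$ does not depend on $P_r$. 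It therefore suffices to study
\begin{equation*}
 f(x)=\frac{c_1x}{\sigma_e^2+c_2x},\qquad c_1=2L|a|^2|g|^2p_d,\quad c_2=(|\bar\rho_b|^2+|a|^2)v,
\end{equation*}
for $x>0$.

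\textbf{Monotonicity.} The hypotheses $a\ne0$, $g\ne0$ and $p_d>0$ give $c_1>0$. Also $c_2\ge |a|^2v\ge 0$. The derivative is $f'(x)=c_1\sigma_e^2/(\sigma_e^2+c_2x)^2$, which is strictly positive because $\sigma_e^2>0$. Composing with the strictly increasing map $P_r\mapsto x$ shows that $d_{\rm MF}^2$ increases strictly with $P_r$. Strictness depends only on $\sigma_e^2>0$: the external noise term is what keeps $f$ from being constant.

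\textbf{Limit.} If $c_2>0$, then $f(x)\to c_1/c_2$ as $x\to\infty$, since the $\sigma_e^2$ term becomes negligible. Dividing numerator and denominator by $|g|^2$,
\begin{equation*}
 \frac{c_1}{c_2}=\frac{|a|^2}{|\bar\rho_b|^2+|a|^2}\,\frac{2Lp_d|g|^2}{|g|^2p_g+\sigma_r^2}
 =\frac{|a|^2}{|\bar\rho_b|^2+|a|^2}\,\frac{2Lp_d}{p_g+\sigma_r^2/|g|^2},
\end{equation*}
which is \eqref{eq:mfceiling}.

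\textbf{Degenerate case.} The only delicate point is $c_2>0$, which needs $v>0$, i.e.\ $p_g>0$ or $\sigma_r^2>0$. I would note that the paper's standing assumption $\Delta>0$ in \eqref{eq:moments} already gives $v>0$. Without it, that is with $p_g=0$ and $\sigma_r^2=0$, the deflection grows without bound, and \eqref{eq:mfceiling} then holds as $+\infty$.

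\textbf{Ideal suppression.} Setting $\bar\rho_b=0$ makes the prefactor $|a|^2/(|\bar\rho_b|^2+|a|^2)$ equal to $1$, which gives $2Lp_d/(p_g+\sigma_r^2/|g|^2)$.
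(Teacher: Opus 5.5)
Your proof is correct and follows the paper's argument. You substitute $E_\mu=L|a|^2\beta^2|g|^2p_d$ and $\sigma_1^2=\sigma_e^2+\beta^2(|\bar\rho_b|^2+|a|^2)v$, show the result is strictly increasing in $\beta^2$ (hence in $P_r$), and then divide by $\beta^2$ to take the limit; your explicit derivative $c_1\sigma_e^2/(\sigma_e^2+c_2x)^2$ and the remark on the degenerate case $v=0$ just spell out details the paper leaves implicit.
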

\emph{Proof:}
Substituting \eqref{eq:afmap} and
$\sigma_1^2=\sigma_e^2+\beta^2(|\bar\rho_b|^2+|a|^2)v$
shows that $2E_\mu/\sigma_1^2$ increases strictly with $\beta^2$, and hence with $P_r$. Dividing by $\beta^2$ and taking $P_r\to\infty$ gives \eqref{eq:mfceiling}.
\hfill$\square$


\begin{corollary}[Ideal-DPI limit of the full NP detector]
Under ideal DPI suppression ($\bar\rho_b=0$), with $\sigma_e^2>0$ and $|a|^2v>0$, the exact NP detector achieves $P_D\to1$ as $P_r\to\infty$.
\end{corollary}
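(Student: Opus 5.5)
The cleanest route avoids the full noncentral chi-square expression \eqref{eq:pd}. Since the NP detector is most powerful at level $\alpha$, its $P_D$ is at least that of any other level-$\alpha$ test. I will therefore exhibit a simpler level-$\alpha$ test whose detection probability tends to one as $P_r\to\infty$.

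The matched-filter benchmark \eqref{eq:mf} cannot serve this purpose. With $\bar\rho_b=0$, Proposition~2 shows its deflection saturates at $2Lp_d/(p_g+\sigma_r^2/|g|^2)$, so $P_D^{\rm MF}$ stays bounded away from one. The energy detector \eqref{eq:energy} does work.

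\textbf{Step 1 (scaling).} With $\bar\rho_b=0$, \eqref{eq:moments} gives $\sigma_0^2=\sigma_e^2$, fixed and independent of $P_r$. It also gives $\sigma_1^2=\sigma_e^2+\beta^2|a|^2v$ and $E_\mu=L|a|^2\beta^2|g|^2p_d$. By \eqref{eq:afmap}, $\beta^2=P_r/(|g|^2P_s+\sigma_r^2)\to\infty$ as $P_r\to\infty$, with all other quantities fixed. The assumption $|a|^2v>0$ then makes $\sigma_1^2/\sigma_0^2\to\infty$.

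\textbf{Step 2 (energy test).} The threshold in \eqref{eq:energy} is $x_P=(\sigma_0^2/\sigma_1^2)\,\Qb^{-1}_{2L,0}(\alpha)$. Here $\Qb^{-1}_{2L,0}(\alpha)$ is a finite constant independent of $P_r$, so $x_P\to0$. The statistic $2Z/\sigma_1^2\sim\chi'^2_{2L}(\lambda)$ has $\lambda=2E_\mu/\sigma_1^2\ge0$. A noncentral chi-square stochastically dominates the central one with the same degrees of freedom. Hence $P_D^{\rm E}\ge \Qb_{2L,0}(x_P)$, and $\Qb_{2L,0}(x_P)\to1$ as $x_P\downarrow0$ by continuity of the central chi-square CDF at zero, which carries no atom. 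Alternatively, I can bound directly: $\Pr\{\chi'^2_{2L}(\lambda)\le x_P\}\le\Pr\{|{\rm Re}\text{ of one complex component}|^2\le x_P\}$, and this tends to zero uniformly in the mean because a Gaussian density is bounded by $1/\sqrt{2\pi}$.

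\textbf{Step 3 (conclusion).} The NP test at the same $\alpha$ satisfies $P_D\ge P_D^{\rm E}$, and trivially $P_D\le1$. Therefore $P_D\to1$. The main obstacle is justifying the stochastic-dominance or uniform-in-$\lambda$ bound of Step 2 while $\lambda$ varies with $P_r$. The direct Gaussian-density bound is the safest way to handle it.
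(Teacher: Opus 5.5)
Your proof is correct, but it takes a different route from the paper's. The paper works directly with the closed form \eqref{eq:pd}. Since $E_\mu\propto\beta^2$ and $\Delta\propto\beta^2$ while $\sigma_0^2=\sigma_e^2$ is fixed, it observes that $\lambda_0=2\sigma_0^2E_\mu/\Delta^2\propto\beta^{-2}\to0$. Hence the null quantile $\Qb^{-1}_{2L,\lambda_0}(\alpha)$ stays bounded, and the threshold argument $(\sigma_0^2/\sigma_1^2)\Qb^{-1}_{2L,\lambda_0}(\alpha)$ tends to zero. The bound $\Qb_{2L,\lambda_1}(x)\ge\Qb_{2L,0}(x)$ then finishes the argument. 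You instead use Neyman--Pearson optimality to lower-bound $P_D$ by the energy detector \eqref{eq:energy}, whose null quantile $\Qb^{-1}_{2L,0}(\alpha)$ is a fixed constant. This buys you two things. First, you never track the noncentral quantile while $\lambda_0$ varies with $P_r$; the paper's ``remains finite'' step tacitly relies on monotonicity or continuity of $\Qb^{-1}_{2L,\lambda}(\alpha)$ in $\lambda$. Second, your argument makes explicit that the variance signature alone drives $P_D\to1$, in contrast with the saturating matched filter of Proposition~2. In exchange, the paper's route stays inside the exact NP expression and shows how the NP threshold itself behaves, without appealing to the optimality lemma. Both arguments close with the same stochastic-dominance and continuity step. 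Your uniform Gaussian-density bound is a valid but unneeded alternative to that step.
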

\emph{Proof:} From \eqref{eq:pd}, $\sigma_0^2=\sigma_e^2$ remains fixed while $\Delta,E_\mu\propto\beta^2$. Thus, $\sigma_0^2/\sigma_1^2\to0$ and $\lambda_0=2\sigma_0^2E_\mu/\Delta^2\to0$. Since $\Qb^{-1}_{2L,\lambda_0}(\alpha)$ remains finite, the threshold argument $x\to0$, giving $P_D\ge\Qb_{2L,0}(x)\to1$. \hfill$\square$


\begin{proposition}[Residual-DPI ceiling]
For fixed $g$, $\bar\rho_b\ne0$, $p_d$, $p_g$, $\sigma_r^2$, with $a\ne0$ and $v>0$, the full NP detector approaches the following limit as $P_r\to\infty$:
\begin{equation}
 P_D^{\infty}=\Qb_{2L,\lambda_1^\infty}\!\left(
 \frac{c_0}{c_1}\Qb^{-1}_{2L,\lambda_0^\infty}(\alpha)\right),
\label{eq:npceiling}
\end{equation}
where $c_0=|\bar\rho_b|^2v$, $c_1=(|\bar\rho_b|^2+|a|^2)v$, $e=L|a|^2|g|^2p_d$, $c_\Delta=|a|^2v$, and $\lambda_i^\infty=2c_i e/c_\Delta^2$.
\end{proposition}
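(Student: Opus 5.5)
The plan is to start from the exact law in \eqref{eq:pd} and show that every quantity entering it depends on $P_r$ only through $\beta^2$, in a form that becomes scale-free once the noise term $\sigma_e^2$ becomes negligible. From \eqref{eq:afmap} and \eqref{eq:moments}, write $\sigma_0^2=\sigma_e^2+\beta^2c_0$, $\sigma_1^2=\sigma_e^2+\beta^2c_1$, $\Delta=\beta^2c_\Delta$ and $E_\mu=\beta^2e$, with $c_0,c_1,c_\Delta,e$ as in the statement. These constants do not depend on $P_r$ because $g,p_d,p_g,\sigma_r^2,\bar\rho_b,a$ are held fixed. Since $\beta^2=P_r/(|g|^2P_s+\sigma_r^2)\to\infty$ as $P_r\to\infty$, it suffices to take $\beta^2\to\infty$.

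Next I will compute the limits of the parameters. The variance ratio is $\sigma_0^2/\sigma_1^2=(\sigma_e^2/\beta^2+c_0)/(\sigma_e^2/\beta^2+c_1)\to c_0/c_1$, and this limit lies in $(0,1)$ because $\bar\rho_b\ne0$, $a\ne0$ and $v>0$. The noncentralities satisfy
\begin{equation*}
\lambda_i=\frac{2\sigma_i^2E_\mu}{\Delta^2}=\frac{2(\sigma_e^2+\beta^2c_i)\beta^2e}{\beta^4c_\Delta^2}=\frac{2(\sigma_e^2/\beta^2+c_i)e}{c_\Delta^2},
\end{equation*}
so $\lambda_i\to2c_ie/c_\Delta^2=\lambda_i^\infty$, which is finite and positive because $p_d>0$ is implicit in a nondegenerate pilot. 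The limit is therefore again a nondegenerate noncentral chi-square configuration. This contrasts with the Corollary, where $c_0=0$ sent the threshold argument to zero.

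The main step, and the only real obstacle, is passing the limit through $\Qb$ and $\Qb^{-1}$. I will use the fact that, for fixed $\nu=2L$, the map $(\lambda,x)\mapsto\Qb_{\nu,\lambda}(x)$ is jointly continuous on $[0,\infty)\times(0,\infty)$. This follows from the Poisson-mixture series $\Qb_{\nu,\lambda}(x)=\sum_k e^{-\lambda/2}(\lambda/2)^k/k!\,\Qb_{\nu+2k,0}(x)$, which converges uniformly for $\lambda$ in compacts, since each term is bounded by the Poisson weight. The map is also strictly decreasing and continuous in $x$, with range $(0,1)$. By the implicit-function/monotone-inverse argument, $\Qb^{-1}_{2L,\lambda}(\alpha)$ is then continuous in $\lambda$: if $\lambda_n\to\lambda$ and $x_n=\Qb^{-1}_{2L,\lambda_n}(\alpha)$, then the $x_n$ are bounded away from $0$ and $\infty$. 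Any limit point $x^*$ satisfies $\Qb_{2L,\lambda}(x^*)=\alpha$ by joint continuity, so $x^*=\Qb^{-1}_{2L,\lambda}(\alpha)$ by strict monotonicity. Hence $\Qb^{-1}_{2L,\lambda_0}(\alpha)\to\Qb^{-1}_{2L,\lambda_0^\infty}(\alpha)\in(0,\infty)$.

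Multiplying by the convergent ratio shows that the threshold argument converges to $(c_0/c_1)\Qb^{-1}_{2L,\lambda_0^\infty}(\alpha)>0$. Joint continuity of $\Qb_{2L,\cdot}(\cdot)$ at $(\lambda_1^\infty,\cdot)$ then gives $P_D\to P_D^\infty$ as in \eqref{eq:npceiling}.
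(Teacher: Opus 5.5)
Your proposal is correct and follows the paper's own route. Like the paper, you write $\sigma_0^2=\sigma_e^2+\beta^2c_0$, $\sigma_1^2=\sigma_e^2+\beta^2c_1$, $E_\mu=\beta^2e$ and $\Delta=\beta^2c_\Delta$, divide by $\beta^2$ in \eqref{eq:pd}, and let $\beta^2\to\infty$. Your only addition is the explicit justification for passing the limit through $\Qb$ and $\Qb^{-1}$, using the Poisson-mixture joint continuity and the continuity of the quantile in $\lambda$, which the paper leaves implicit. Also, $p_d>0$ is not needed, since your continuity argument already covers $\lambda_i^\infty=0$.
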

\emph{Proof:} Dividing $(\sigma_0^2,\sigma_1^2,E_\mu,\Delta)$ by $\beta^2$ in \eqref{eq:pd} and taking $P_r\to\infty$ gives the stated limit. \hfill$\square$

\subsection{Partially Correlated Residual Direct Path}
\label{sec:corr}

Model \eqref{eq:raw} assumes decorrelated residual DPI, whereas practical cancellation may retain part of the target-cell waveform. We model the residual as $\widetilde u=\eta u+\sqrt{1-\eta^2}\,u^{\perp}$, where $u$ contains the target-cell payload and relay noise, $u^{\perp}$ is an independent copy (proper complex Gaussian, variance $v$, white over $\ell$), and $\eta\in[0,1]$ is the post-processing correlation, not determined by the $0.08$-cell offset. The null variance $\sigma_0^2$ in \eqref{eq:moments} is unchanged, while
\begin{equation}
 \Delta(\eta)=\beta^2v\!\left(|a|^2+2\eta\,\Re\{a\bar\rho_b^*\}\right),
\label{eq:coherentdpi}
\end{equation}
which reduces to the decorrelated model at $\eta=0$ and the fully unresolved coherent model at $\eta=1$, where $\widetilde u=u$. 
Proposition~1 does not follow by substitution: with $\phi=\arg a-\arg\bar\rho_b$, correlation changes the variance increment to $K_\eta=K+2\eta\cos\phi\sqrt{KC}$ but leaves $E_\mu\propto|a|^2$, invalidating $E_\mu/L+\Delta=K\bar v$. Using $S=N_0+Cv$ and $S_\eta=N_0+(C+K_\eta)v$ gives
\begin{equation}
 \frac{\partial D}{\partial p_d}=LG\!\left[\frac{CK\bar v-(K_\eta-K)N_0}{S^2}
 +\frac{K_\eta N_0}{S\,S_\eta}\right],
\label{eq:klderivative2}
\end{equation}
which reduces to \eqref{eq:klderivative} when $K_\eta=K$. To establish positivity, let $\tilde K=Kv$, $\tilde C=Cv$, $\theta=\bar v/v\ge1$ ($v\ge\sigma_r^2>0$) and $m=\eta\cos\phi$. Multiplying \eqref{eq:klderivative2} by $S^2S_\eta v/(LG)$ gives $\tilde KJ$, where $J=-4N_0\tilde Cm^2+2\sqrt{\tilde K\tilde C}\,(\tilde C\theta-N_0)m+N_0^2+N_0\tilde C(1+\theta)+\tilde C^2\theta+\tilde K\tilde C\theta$ is concave in $m$. Minimizing over $m\in[-1,1]$ at an endpoint and then over $\sqrt{\tilde K}$ gives $J\ge N_0(\theta-1)(N_0/\theta+\tilde C)$, positive for $p_d>0$ and nonnegative at $p_d=0$. Hence, $D$ is strictly increasing on $[0,p_d^{\max}]$, making $p_d^{\max}$ KL-optimal at fixed $C$ for any correlation. This fixed-$C$ model follows from the independence of $\bar\rho_b$ and $\beta$ from $p_d$. In contrast, Figs.~\ref{fig:results}(f)--(g) fix the residual power $Cv$, hence $\sigma_0^2$, as $p_d$ varies; at the $+5$ dB level of Fig.~\ref{fig:results}(g), monotonicity holds only for $-0.81\le\eta\cos\phi\le0.75$. The completion in \eqref{eq:square} remains valid for $\Delta(\eta)\ne0$, but the relative phase may cause $\Delta(\eta)$ to cross zero, with a negative gap reversing the decision tail. For $\Delta(\eta)\ne0$, define $s=\operatorname{sgn}\Delta(\eta)$ and $\mathcal G_{s,\nu,\lambda}(x)=\Pr\{\chi'^2_\nu(\lambda)>x\}$ for $s=+1$ and $\Pr\{\chi'^2_\nu(\lambda)<x\}$ for $s=-1$,
\begin{equation}
 P_{D,\eta}=\mathcal G_{s,2L,\lambda_{1}}\!\left(
 \frac{\sigma_{0}^2}{\sigma_{1}^2}
 \mathcal G^{-1}_{s,2L,\lambda_{0}}(\alpha)\right),~~
 \lambda_{i}=\frac{2\sigma_{i}^2E_\mu}{\Delta(\eta)^2}.
\label{eq:coherentpd}
\end{equation}
At $\Delta(\eta)=0$, the NP detector reduces to matched filtering. This occurs at $\eta_0=-|a|^2/(2\Re\{a\bar\rho_b^*\})=|a|/(2|\bar\rho_b||\cos\phi|)$ for $\cos\phi<0$ and $\eta_0\le1$; $|a|/(2|\bar\rho_b|)$ applies only at $\phi=\pi$. Since the pilot mean remains, $P_D$ reaches the matched-filter value $Q(Q^{-1}(\alpha)-\sqrt{2E_\mu/\sigma_0^2})$, not $P_{\rm FA}$: $0.0154$, or $15P_{\rm FA}$, in Fig.~\ref{fig:results}(g). Beyond $\eta_0$, detection recovers on the reversed tail. Thus, residual-DPI correlation determines both the variance-gap sign and the optimal decision tail.

\begin{figure}[!t]
\centering
\includegraphics[width=\linewidth,keepaspectratio]{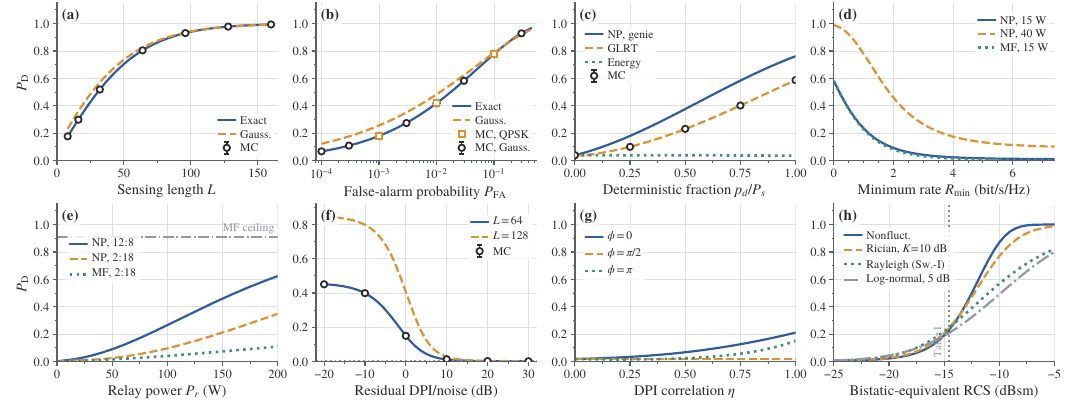}
\vspace{-5pt}
\caption{Link-budget analysis and raw AF-cascade simulation; Monte Carlo (MC) markers carry $95\%$ Wilson intervals and every threshold is analytic. (a) Finite-$L$ validation at $P_{\rm FA}=0.02$. (b) Low-$P_{\rm FA}$ accuracy at $L=32$, with QPSK markers. (c) Unknown echo phase: GLRT \eqref{eq:glrt}, genie bound \eqref{eq:pd} and energy detector. (d) Rate tradeoff at $L=64$. (e) Relay power at $L=8$ for $p_d{:}p_g$ of $12{:}8$ and $2{:}18$ W; MF ceiling \eqref{eq:mfceiling}. (f) Decorrelated residual DPI. (g) Partially correlated residual DPI, \eqref{eq:coherentpd}, at $+5$ dB; at $\phi=\pi$ the gap crosses zero and $P_D$ floors at the matched-filter value. (h) Fluctuating target over a $\pm10$ dB RCS sensitivity range, GLRT at $p_d/P_s=0.5$ matched to the mean $|a|^2$ on the abscissa. Defaults: $P_r=40$ W, $P_{\rm FA}=10^{-3}$, $L=32$, ideal DPI suppression, $p_d=p_d^{\max}$ at $R_{\min}=2$ bit/s/Hz.}
\label{fig:results}
\end{figure}


\section{Link Budget and Results}
We use the reproducible free-space link budget in Table~\ref{tab:budget} instead of normalized channel gains; the 600-km satellite--relay range is consistent with the LEO class in \cite{3gpp38181}. We choose $\sigma_b=0.035$ m$^2$ ($-14.56$ dBsm) as an illustrative bistatic-equivalent RCS: \cite{ezuma2020rcs} reports $-14.24$ to $-15.03$ dBsm for three small drones at $15$ GHz, based on monostatic measurements with strong aspect dependence. Fig.~\ref{fig:results}(h) therefore varies $\sigma_b$ over $\pm10$ dB around this value. For a one-way link of length $d$ with antenna gains $G_t$, $G_r$ and loss $L_x$, and for the bistatic echo,
\begin{align}
 |q(d)|^2&=10^{(G_t+G_r-L_x)/10}\left(\frac{\lambda_c}{4\pi d}\right)^2,
\label{eq:friis}\\
 |a|^2&=\frac{10^{(G_{r,t}+G_{s,r}-L_e)/10}\lambda_c^2\sigma_b}
 {(4\pi)^3d_{rt}^2d_{ts}^2},\quad \sigma_n^2=kT_0B10^{F/10}.
\label{eq:radarbudget}
\end{align}
Here $\lambda_c$ is the wavelength; $G_{r,t}$ and $G_{s,r}$ are the relay-transmit and sensing-receive gains; $L_e$ is the echo loss; $d_{rt}$ and $d_{ts}$ are the two bistatic path lengths; $k$ is Boltzmann's constant; and $F$ is the noise figure of the relay, user or sensing receiver for $\sigma_r^2$, $\sigma_u^2$ or $\sigma_e^2$. Applying \eqref{eq:friis} to $d_{sr}$, $d_{ru}$ and $d_{rs}$ gives $|g|^2$, $|h|^2$ and $|b|^2$, respectively.

\begin{table}[t]
\centering
\caption{Reproducible 15-GHz SAGIN Setting; Fig.~\ref{fig:results}(e) sweeps $P_r$ to 200 W.}
\label{tab:budget}
\footnotesize
\setlength{\tabcolsep}{3.0pt}
\begin{tabular}{l l}
\hline
Parameter & Value \\
\hline
Carrier, bandwidth $B$, temperature $T_0$ & 15 GHz, 100 kHz, 290 K \\
Satellite power; relay power & 20 W; 40 W (nominal) \\
Satellite-transmit / relay-receive gains & 38 / 30 dBi \\
Relay-transmit / user / sensing-receive gains & 15 / 3 / 20 dBi \\
$d_{sr},d_{ru},d_{rt}\!=\!d_{ts},d_{rs}$ & 600 km, 2, 1.118, 2 km \\
Target offset from baseline & 500 m ($26.6^\circ$ split) \\
Relay / user / sensing noise figures & 5 / 5 / 4 dB \\
Link losses: $sr,ru,$ echo, direct & 2 / 2 / 4 / 2 dB \\
Illustrative bistatic-equivalent UAV RCS $\sigma_b$ & 0.035 m$^2$ ($-14.56$ dBsm) \\
\hline
\end{tabular}
\end{table}

This budget gives $P_s|g|^2=-62.52$ dBm at the relay, $|a|^2P_r=-126.44$ dBm at $P_r=40$ W, and a sensing-noise power of $-119.98$ dBm. The uncancelled relay--sensor direct-path signal is $77.01$ dB above the sensing-noise floor, already accounting for the $121.99$ dB free-space loss over $d_{rs}$ and the $15$ and $20$ dBi antenna gains. Thus, a residual-DPI/noise level of $x$ dB in Fig.~\ref{fig:results}(f) requires $77.01-x$ dB of \emph{additional} isolation and cancellation, or $47$--$77$ dB for $x\in[0,30]$ dB. The echo and direct path arrive $26.6^\circ$ apart against a $3$-dB beamwidth near $17^\circ$; angular discrimination may help but is not credited. Analog suppression must precede digital cancellation to avoid receiver saturation, which is not modeled. The in-band full-duplex relay is assumed to suppress its own self-interference before \eqref{eq:raw}. We assume low clutter: Fig.~\ref{fig:results}(h) examines target fluctuation, not ground multipath or urban clutter, which cannot generally be modeled by increasing $\sigma_e^2$.



Monte Carlo markers use raw samples from \eqref{eq:raw}, not the derived chi-square law: $4.5\times10^4$ trials for Fig.~\ref{fig:results}(a), $6\times10^4$ for (c) and (f), and $1.8\times10^5$ and $4\times10^5$ for Gaussian and QPSK markers in (b). Analytic thresholds are used throughout; the widest $95\%$ Wilson interval is $\pm0.0046$. Separate $10^6$-trial checks at the setting of Fig.~\ref{fig:results}(b) confirm detection and false-alarm performance, with $P_{\rm FA}\in[0.97,1.02]\times10^{-3}$ for Gaussian, QPSK, and 16-ary quadrature amplitude modulation (16-QAM).


Figs.~\ref{fig:results}(a)--(b) confirm \eqref{eq:pd} within the $95\%$ Wilson intervals; at $P_r=40$ W, $L=32$ and $P_{\rm FA}=10^{-4}$, the Gaussian approximation overestimates $P_D$ by $0.056$. Fig.~\ref{fig:results}(c) shows that the NP detector combines payload-echo energy and coherent pilot correlation. Fig.~\ref{fig:results}(d) illustrates \eqref{eq:ratefeas} and the KL-optimal allocation \eqref{eq:pdmax}--\eqref{eq:klderivative}: a higher minimum rate reduces $P_D$, although the payload echo remains useful. Fig.~\ref{fig:results}(e) shows that the MF detector remains below its asymptotic limit over the simulated power range. At $200$ W, its deflection is $0.90$ and $\sigma_0/\sigma_1=0.70$, compared with the limiting values of $1.78$ and $0$. Thus, \eqref{eq:mf} gives $P_D^{\rm MF}=0.11$, versus the limiting $0.91$. Under ideal DPI suppression, the full NP detector has no such ceiling. The extended power sweep illustrates these trends rather than a practical $200$ W relay design. Fig.~\ref{fig:results}(f) shows the effect of residual DPI: the leakage level at which $P_D$ falls to $0.2$ increases from $-1.6$ dB at $L=64$ to $+3.3$ dB at $L=128$, indicating about $5$ dB more leakage tolerance when the dwell length is doubled.

\subsection{Receiver Implementation}
No online optimization is needed: for fixed $(L,\alpha,\sigma_0^2,\sigma_1^2,E_\mu)$, the threshold $u_\alpha=\tfrac{1}{2}\sigma_0^2\Qb^{-1}_{2L,\lambda_0}(\alpha)$ can be tabulated offline, while evaluating \eqref{eq:T} requires one energy accumulation and one pilot correlation, giving $O(L)$ complexity with constant memory. Two practical issues remain. First, $\sigma_0^2$ must be estimated. With $\widehat\sigma_0^2=N^{-1}\sum_{n=1}^{N}|y_n|^2$ computed from $N$ independent target-free reference samples, one per cell, $\xi=\widehat\sigma_0^2/\sigma_0^2$ satisfies $2N\xi\sim\chi^2_{2N}$. The estimate affects both the statistic and its threshold: with $\Delta$, $E_\mu$ fixed, the plug-in statistic is $\widehat U=\sum_\ell|y[\ell]+\xi\kappa\mu_\ell|^2$, whose \emph{actual} null law, normalized by the true $\sigma_0^2$, is $2\widehat U/\sigma_0^2\,|\,\xi\sim\chi'^2_{2L}(\xi^2\lambda_0)$. The receiver, however, calibrates under its \emph{assumed} model, in which $2\widehat U/\widehat\sigma_0^2\sim\chi'^2_{2L}(\xi\lambda_0)$. Thus,
\begin{equation}
 \alpha_N=\E_\xi\!\left[\Qb_{2L,\xi^2\lambda_0}\!\left(\xi\,\Qb^{-1}_{2L,\xi\lambda_0}(\alpha)\right)\right].
\label{eq:cfar}
\end{equation}
The two effects partly offset, but false-alarm inflation remains. Table~\ref{tab:cfar} shows $\alpha_N/\alpha=102$ at $N=16$, $\alpha=10^{-4}$, falling below $3$ only for $N\ge128$. The factor $c_N$ scales the plug-in threshold $\tfrac{1}{2}\widehat\sigma_0^2\Qb^{-1}_{2L,\xi\lambda_0}(\alpha)$ to restore $\alpha_N=\alpha$, at some cost in $P_D$. This applies to the known-phase plug-in detector, not the GLRT in Section~\ref{sec:mismatch}. Second, scanning $N_c$ approximately independent cells requires $\alpha=1-(1-\alpha_{\rm scan})^{1/N_c}$; correlated cells require an effective-cell correction.

\begin{table}[t]
\centering
\caption{Realized $P_{\rm FA}$ With $N$ i.i.d.\ Reference Samples ($L=32$, $\lambda_0=126$)}
\label{tab:cfar}
\footnotesize
\setlength{\tabcolsep}{3.4pt}
\begin{tabular}{l c c c c c}
\hline
$N$ & 16 & 32 & 64 & 128 & 512 \\
\hline
$\alpha_N/\alpha$, $\alpha=10^{-3}$ & $22.7$ & $7.5$ & $3.2$ & $1.9$ & $1.2$ \\
$\alpha_N/\alpha$, $\alpha=10^{-4}$ & $102$ & $21.7$ & $6.0$ & $2.6$ & $1.3$ \\
$c_N$, $\alpha=10^{-4}$ & $1.53$ & $1.18$ & $1.07$ & $1.03$ & $1.01$ \\
\hline
\end{tabular}
\end{table}


\section{Target-Coefficient Uncertainty}
\label{sec:mismatch}
\subsection{Unknown Echo Phase}
With $B=100$ kHz, the range resolution is $c/2B=1499$ m, corresponding to $7.5\times10^4$ wavelengths at $15$ GHz, so the range cell alone does not determine $\arg a$. We model the echo phase as an unknown deterministic parameter without prior phase information and maximize the likelihood over it; phase estimates from previous observations could also be used. The magnitude is better constrained: \eqref{eq:radarbudget} gives $|a|^2$ for an assumed RCS, although the actual RCS may vary. Since $\Delta$, $w$ and $\kappa$ depend only on $|a|$, maximizing the log-likelihood ratio over the phase affects only the coherent term in \eqref{eq:T}, replacing $\Re\{\cdot\}$ by its modulus. Defining $W=L^{-1/2}\sum_\ell c^*[\ell]y[\ell]$, $R=|W|$ and $V=\sum_\ell|y[\ell]|^2-R^2$, projection onto $\mathbf c/\sqrt L$, with $\mathbf c=[c[1],\dots,c[L]]^T$, makes $V$ independent of $W$. Completing the square as in \eqref{eq:square} then gives $U_G=(R+\omega)^2+V$, $\omega=\kappa\sqrt{E_\mu}$.


\looseness=-1
This derivation requires $\Delta$, $w$, $\kappa$ to depend only on $|a|$, which is not the case in \eqref{eq:coherentdpi} when $\eta\bar\rho_b\ne0$. In that case, $\Delta(\eta)$ depends on $\arg a$, so maximizing over the phase also changes the variance and normalization. Thus, \eqref{eq:coherentpd} remains a known-coefficient benchmark.


\begin{proposition}[Exact unknown-phase GLRT]
For $L\ge2$, $\eta\bar\rho_b=0$, unknown echo phase and assumed $|a|$, under $\HH_i$, $R$ follows a Rice distribution with noncentrality $\delta_i\sqrt{E_\mu}$, scale $\sigma_i/\sqrt2$ and density $f_{R_i}$, while $V\sim(\sigma_i^2/2)\chi^2_{2L-2}$ is independent of $R$. Thus, for $u>\omega^2$ and $r_\star=\sqrt u-\omega$,
\begin{align}
 \Pr\{U_G>u\}=&\int_0^{r_\star}\!\Qb_{2L-2,0}\!\left(\frac{2\left(u-(r+\omega)^2\right)}{\sigma_i^2}\right)\!f_{R_i}(r)\,dr\nonumber\\
 &+\Qb_{2,\lambda_i^W}\!\left(2r_\star^2/\sigma_i^2\right),\quad \lambda_i^W=\frac{2\delta_i^2E_\mu}{\sigma_i^2},
\label{eq:glrt}
\end{align}
and $\Pr\{U_G>u\}=1$ for $u\le\omega^2$. Setting the threshold using \eqref{eq:glrt} under $\HH_0$ to achieve $\alpha$ and evaluating the same expression under $\HH_1$ gives the exact $P_D$ for a Gaussian payload, given $|a|$ and the true $\sigma_0^2$.
\end{proposition}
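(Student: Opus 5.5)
The proof proceeds by an orthogonal change of basis that splits the observation vector into the pilot direction and its complement. Then $U_G$ is conditioned on $R$, and the resulting one-dimensional integral is split according to whether the residual energy constraint is active.

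\emph{Step 1 (distributions).} Complete $\mathbf c/\sqrt L$ to a unitary matrix $\mathbf Q=[\mathbf c/\sqrt L,\mathbf Q_\perp]$. This is possible since $|c[\ell]|=1$ gives $\|\mathbf c\|^2=L$. Set $\tilde{\mathbf y}=\mathbf Q^H\mathbf y$. With $\eta\bar\rho_b=0$, the decorrelated model \eqref{eq:raw}--\eqref{eq:moments} gives, under $\HH_i$, $\mathbf y\sim\CN(\delta_i\bm\mu,\sigma_i^2\mathbf I)$ with $\bm\mu=a\beta g\sqrt{p_d}\,\mathbf c$. Unitary invariance of white proper Gaussian vectors then yields $\tilde{\mathbf y}\sim\CN(\mathbf Q^H\delta_i\bm\mu,\sigma_i^2\mathbf I)$.

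\begin{itemize}
\item The first entry is $W$. Its mean is $\delta_i a\beta g\sqrt{p_d}\sqrt L$, whose modulus is $\delta_i\sqrt{E_\mu}$.
\item The remaining $L-1$ entries have zero mean, because $\mathbf Q_\perp^H\mathbf c=\mathbf 0$.
\end{itemize}

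It follows that $R=|W|$ is Rice with noncentrality $\delta_i\sqrt{E_\mu}$ and per-component scale $\sigma_i/\sqrt2$. By Parseval, $V=\|\mathbf Q_\perp^H\mathbf y\|^2$, which is independent of $W$. Since $V$ is a sum of $L-1$ zero-mean complex Gaussian squared moduli, $2V/\sigma_i^2\sim\chi^2_{2L-2}$; this requires $L\ge2$. The GLRT form $U_G=(R+\omega)^2+V$ is taken from the text preceding the statement. It comes from replacing $\Re\{\mu_\ell^*y\}$ by the modulus after phase maximization, which is valid because $\Delta,w,\kappa$ depend only on $|a|$ when $\eta\bar\rho_b=0$.

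\emph{Step 2 (conditioning).} Since $V\ge0$ and $R\ge0$, we have $U_G\ge\omega^2$. Hence $\Pr\{U_G>u\}=1$ for $u<\omega^2$. At $u=\omega^2$ the event has probability one as well, because $\Pr\{R=0,V=0\}=0$.

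For $u>\omega^2$, condition on $R=r$ and use independence:
\[
\Pr\{U_G>u\mid R=r\}=\Pr\{V>u-(r+\omega)^2\}.
\]
Two cases arise.

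\begin{itemize}
\item If $r<r_\star=\sqrt u-\omega$, this probability equals $\Qb_{2L-2,0}\bigl(2(u-(r+\omega)^2)/\sigma_i^2\bigr)$.
\item If $r\ge r_\star$, the argument is nonpositive and the probability is $1$.
\end{itemize}

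Integrating against $f_{R_i}$ gives the first term of \eqref{eq:glrt}, plus $\Pr\{R>r_\star\}$.

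\emph{Step 3 (tail term and calibration).} Because $R$ is Rice, $2R^2/\sigma_i^2=2|W|^2/\sigma_i^2$ follows $\chi'^2_2$ with noncentrality $2\delta_i^2E_\mu/\sigma_i^2=\lambda_i^W$. Therefore $\Pr\{R>r_\star\}=\Qb_{2,\lambda_i^W}(2r_\star^2/\sigma_i^2)$, which is the second term of \eqref{eq:glrt}. Here $r_\star>0$ holds precisely when $u>\omega^2$.

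$\Pr\{U_G>u\}$ is continuous and nonincreasing in $u$, and it tends to $0$ as $u\to\infty$. So for any $\alpha\in(0,1)$, a threshold $u_\alpha>\omega^2$ with $P_{\rm FA}=\alpha$ exists under $\HH_0$. Evaluating the same expression under $\HH_1$ gives $P_D$. This is exact given $|a|$ and the true $\sigma_0^2$, because the only modelling approximation is the Gaussian payload assumption stated at the outset.

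\emph{Main obstacle.} The point requiring most care is Step 1's independence and mean claims. In particular, one must check that the pilot projection captures the whole mean, so that $V$ is central, and that the phase-maximized statistic equals $(R+\omega)^2+V$ with the correct positive sign of $\omega$. This sign relies on $\kappa>0$, which follows from $\Delta>0$ when $\eta\bar\rho_b=0$. Everything after that is a routine conditioning argument.
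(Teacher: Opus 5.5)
Your proposal is correct and follows essentially the same route as the paper. You project $\mathbf y$ onto $\mathbf c/\sqrt L$ to split it into the Rice-distributed $R=|W|$ and the independent central residual $V$, complete the square to obtain $U_G=(R+\omega)^2+V$, and condition on $R$ with the integral split at $r_\star$; you also spell out what the paper leaves implicit: the unitary completion, the $u\le\omega^2$ boundary case, the role of $\Delta>0$ in fixing the test direction, and the existence of the threshold.
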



Both tail probabilities involve single integrals, so the finite-$L$ exactness of \eqref{eq:pd} is retained, with \eqref{eq:pd} serving as the genie bound. The performance loss arises entirely from the coherent component and vanishes without a pilot: Fig.~\ref{fig:results}(c) shows that the two detectors coincide with the energy detector at $p_d=0$ but diverge as the pilot power increases.


\subsection{Plug-In Mismatch and Target Fluctuation}
Consider a detector using $\widehat a=\varrho e^{j\psi}a$, giving $\widehat\mu_\ell=\varrho e^{j\psi}\mu_\ell$, $\widehat\Delta=\varrho^2\Delta>0$ and $\widehat\kappa=\sigma_0^2/\widehat\Delta$, while the actual target-present distribution remains $\CN(\mu_\ell,\sigma_1^2)$. Calibrating the false-alarm probability exactly for $U^{\rm mis}=\sum_\ell|y[\ell]+\widehat\kappa\widehat\mu_\ell|^2$ gives
\begin{equation}
 P_D^{\rm mis}=\Qb_{2L,\zeta_1}\!\left(
 \frac{\sigma_0^2}{\sigma_1^2}\Qb^{-1}_{2L,\zeta_0}(\alpha)\right),
\label{eq:mismatch}
\end{equation}
where $\zeta_0=2\widehat\kappa^2\varrho^2E_\mu/\sigma_0^2$ and $\zeta_1=2E_\mu(1+\widehat\kappa^2\varrho^2+2\widehat\kappa\varrho\cos\psi)/\sigma_1^2$. Mismatch changes only the noncentrality parameters of the chi-square distributions, and $\varrho=1$, $\psi=0$ recovers \eqref{eq:pd}. For the known-phase detector, Fig.~\ref{fig:mismatch} gives a matched detection probability of $0.3758$: a $\pm6$ dB magnitude error reduces it by at most $0.016$, whereas a $60^\circ$ phase error alone causes a reduction of $0.30$. A magnitude overestimate can even improve detection at large $\psi$: increasing $\widehat\Delta$ reduces $\widehat\kappa$ and gives less weight to the mismatched coherent term. The GLRT is insensitive to phase error, and over the same $\pm6$ dB magnitude range its $P_D$ varies by at most $0.010$ around $0.230$, as shown by the flat band in Fig.~\ref{fig:mismatch}(b). These results support estimating the phase while using an assumed magnitude, rather than treating the magnitude as perfectly known.

For target fluctuations with $|a|^2$ fixed within each dwell, $P_D$ is averaged over $|a|^2$, while the false-alarm rate under $\HH_0$ remains unchanged. Fig.~\ref{fig:results}(h) compares non-fluctuating, Rician ($10$-dB factor), Rayleigh (Swerling-I) \cite{swerling1960,song2013rcs}, and log-normal models fitted to $15$-GHz drone measurements \cite{ezuma2020rcs}. All use the same linear mean $|a|^2$; the log-normal standard deviation is $5$ dB for $10\log_{10}|a|^2$. Detector weights and thresholds are set at the nominal mean, not adjusted for each realization.

\begin{figure}[t]
\centering
\includegraphics[width=0.90\linewidth,keepaspectratio]{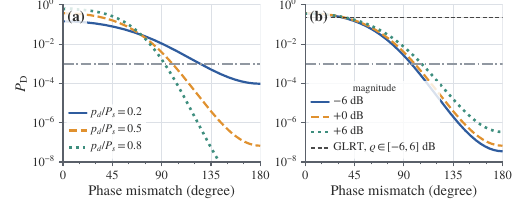}
\vspace{-5pt}
\caption{Exact plug-in mismatch of the \emph{known-phase} detector, \eqref{eq:mismatch}, at $P_r=40$ W, $L=32$, $P_{\rm FA}=10^{-3}$, ideal DPI suppression. (a) Phase error; pilot-heavy operation is more phase-sensitive. (b) Joint error at $p_d/P_s=0.5$; dashed band: unknown-phase GLRT over the same $\pm6$ dB magnitude error, flat in $\psi$.}
\label{fig:mismatch}
\end{figure}

\section{Conclusion}
This letter investigated bistatic UAV detection in SAGINs, where an AF relay supports communication while also illuminating a target. We derived exact finite-sample detection probabilities by accounting for the target-reflected pilot, payload, and relay noise. The analysis showed that reflected relay noise provides useful variance information even without payload transmission, while relay gain normalization couples communication and sensing performance. We further characterized the effects of relay power and residual direct-path interference, and developed an exact phase-only GLRT for an assumed target magnitude. The results provide a basis for practical dual-use AF sensing in SAGINs, while joint magnitude--phase estimation and clutter remain open challenges.



\begin{thebibliography}{6}

\bibitem{liu2022isac}
F. Liu, Y. Cui, C. Masouros, J. Xu, T. X. Han, Y. C. Eldar,
and S. Buzzi,
``Integrated sensing and communications: Toward dual-functional
wireless networks for 6G and beyond,''
\emph{IEEE J. Sel. Areas Commun.},
vol. 40, no. 6, pp. 1728--1767, Jun. 2022.

\bibitem{xiong2023tradeoff}
Y. Xiong, F. Liu, Y. Cui, W. Yuan, T. X. Han, and G. Caire,
``On the fundamental tradeoff of integrated sensing and
communications under Gaussian channels,''
\emph{IEEE Trans. Inf. Theory},
vol. 69, no. 9, pp. 5723--5751, Sep. 2023.

\bibitem{cheng2025networked}
G. Cheng, X. Song, Z. Lyu, and J. Xu,
``Networked ISAC for low-altitude economy: Coordinated transmit
beamforming and UAV trajectory design,''
\emph{IEEE Trans. Commun.},
vol. 73, no. 8, pp. 5832--5847, Aug. 2025.

\bibitem{song2026lae}
X. Song, X. Yu, J. Xu, and D. W. K. Ng,
``Integrated sensing and communications for low-altitude economy
with deterministic sensing and Gaussian information signals,''
arXiv:2604.19040, Apr. 2026.

\bibitem{xie2025bistatic}
L. Xie, F. Liu, S. Song, and S. Jin,
``Bistatic target detection by exploiting both deterministic pilots
and unknown random data payloads,''
arXiv:2508.18728, Aug. 2025.

\bibitem{song2025activeirs}
X. Song, X. Qin, X. Yu, J. Xu, and D. W. K. Ng,
``Active-IRS-enabled target detection,''
\emph{IEEE Wireless Commun. Lett.},
vol. 14, no. 9, pp. 2763--2767, Sep. 2025.

\bibitem{bhola2026amud}
Bhola, Y.-J. Chen, A. Balakrishnan, S. De, and L.-C. Wang,
``Adaptive multi-UAV relay deployment framework in satellite
aerial ground integrated systems,''
\emph{IEEE Trans. Veh. Technol.}, early access, 2026.

\bibitem{3gpp38181}
3GPP,
``NR; Satellite access node conformance testing,''
3GPP TS 38.181, V17.1.0, Release 17, Jul. 2023.

\bibitem{ezuma2020rcs}
M. Ezuma, M. Funderburk, and I. Guvenc,
``Compact-range RCS measurements and modeling of small drones
at 15 GHz and 25 GHz,''
in \emph{Proc. IEEE Radio Wireless Symp. (RWS)},
San Antonio, TX, USA, Jan. 2020, pp. 313--316.

\bibitem{swerling1960}
P. Swerling,
``Probability of detection for fluctuating targets,''
\emph{IEEE Trans. Inf. Theory},
vol. 6, no. 2, pp. 269--308, Apr. 1960.

\bibitem{song2013rcs}
X. Song, W. D. Blair, P. Willett, and S. Zhou,
``Dominant-plus-Rayleigh models for RCS: Swerling III/IV versus Rician,''
\emph{IEEE Trans. Aerosp. Electron. Syst.},
vol. 49, no. 3, pp. 2058--2064, Jul. 2013.

\end{thebibliography}
\end{document}